\title{A certifying extraction with time bounds from Coq to call-by-value $\lambda$-calculus}
\author{Yannick Forster}{Saarland University, Saarland Informatics Campus (SIC), Saarbrücken, Germany}{forster@ps.uni-saarland.de}{}{}%
\author{Fabian Kunze}{Saarland University, Saarland Informatics Campus (SIC), Saarbrücken, Germany}{kunze@ps.uni-saarland.de}{}{}%
\authorrunning{Y. Forster and F. Kunze}
\newcommand{\term}{\ensuremath{\mathbf{T}}\xspace}
\let\coq\lstinline
\newcommand{\coqm}[1]{\text{\coq{#1}}}
\keywords{cbv $\lambda$-calculus, Coq, constructive type theory, extraction, computability}
\renewcommand\L{\ensuremath{\mathsf{L}}\xspace}
\begin{document}

\maketitle

\begin{abstract}
  We provide a plugin extracting Coq functions of simple polymorphic types to the (untyped) call-by-value $\lambda$-calculus~\L.
  The plugin is implemented in the MetaCoq framework and entirely written in Coq.
  We provide Ltac tactics to automatically verify the extracted terms w.r.t a logical relation connecting Coq functions with correct extractions and time bounds, essentially performing a certifying translation and running time validation.
  We provide three case studies:
  A universal \L-term obtained as extraction from the Coq definition of a step-indexed self-interpreter for \L, a many-reduction from solvability of Diophantine equations to the halting problem of \L, and a polynomial-time simulation of Turing machines in \L.
\end{abstract}

\section{Introduction}

Every function definable in constructive type theory is computable in a model of computation.
This also enables many proof assistants based on constructive type theory to implement extraction into a ``real'' programming language.
On the more foundational side, various realisability models for fragments of constructive type theory increase the trust in this meta-theorem, because realisers for types are the codes of computable functions.

The computability of all definable functions also enables the study of synthetic computability theory in constructive type theory~\cite{forster2019synthetic, bauer2006first}.
For instance, one can define decidability by \coq|dec P := exists f, forall x, P x <-> f x = true| and no reference to a concrete model of computation is needed.
The undecidability of a predicate $p$ can be shown by defining a many-one reduction from the halting problem of Turing machines to $p$ in Coq, again without referring to a concrete model.
The computability of all definable functions can, however, not be proved inside the type theory itself, similar to other true statements like parametricity.
At the same time, for every concrete defined function of the type theory, one can always prove computability as theorem in the type theory.
Given for instance any concrete function $f : \nat \to \nat$ definable in constructive type theory, one can construct a term of the $\lambda$-calculus $t_f$ s.t. for all $n : \nat$, there is a proof in the type theory that $t_f ~ \overline n$ reduces to $\overline{f n}$ (where $\overline{\vphantom t \cdot}$ is a suitable encoding of natural numbers).
The construction of $t_f$ from $f$ is relatively simple, since it is syntax-directed and the terms of type theory are just (possibly type-decorated) terms of an expressive untyped $\lambda$-calculus.
Another way to see this construction is as extraction from the type theory into the $\lambda$-calculus.

We implement one such construction of $\lambda$-terms $t_f$ for a certain subset of type theory:
We use the MetaCoq framework~\cite{anand2018towards} to extend the proof assistant Coq with a command to extract Coq functions of simple polymorphic types into the weak call-by-value $\lambda$-calculus \L and provide tactics to automatically prove the correctness of the term.
In addition to the correctness, our extraction command can generate recurrence equations that, if instantiated with a function by the user, describe the time complexity as number of $\beta$-steps of the extracted $\lambda$-term on its arguments.
Our target calculus \L has been used before to formalise computability theory in Coq~\cite{FS}.
Since it is (syntactically) the pure $\lambda$-calculus, recursive functions have to be encoded using a fixed-point combinator and inductive types using Scott's encoding.

Our extraction has several use cases:

First, while parts of computability theory can be formalised in Coq without referring to a model of computation~\cite{forster2019synthetic}, one needs a deep embedding of computable functions to e.g. construct universal machines.
Our framework then allows the user to write all functions in Coq and automatically get $\lambda$-terms computing them, similar to practice on paper where function in the model are never spelled out.
For instance, the automated construction of a universal $\lambda$-term takes about 30 lines and no manual proofs, whereas by hand construction and verification take about 500 lines~\cite{FS}.

Second, to the best of our knowledge, there are no formalisations of computational complexity theory in any proof assistant.
We hope that our framework can be used to enable formalisations of basic complexity theory.
One tedium -- even on paper -- when doing complexity theory in a way such that all details are spelled out is that constructing and verifying functions in the chosen model of computation is hard.
With our framework, this burden is significantly lowered: Implementations can be given in Coq and only a suitable running-time function has to be given by hand.
We extract a definition of Turing machines to show that \L can simulate $k$ steps of a Turing machine in a number of $\beta$-steps linear in $k$.

Third, synthetic undecidability and the notion of synthetic decidability and enumerability have been analysed in Coq~\cite{FHS, forster2019synthetic, forster2019certified, Larchey-WendlingForster:2019:H10_in_Coq}.
This resulted in a library of undecidable problems in Coq~\cite{forster2018towards}.
All problems of the library are shown undecidable by reduction from the halting problem of Turing machines.
To show that all contained problems are actually interreducible with the halting problem, one has to give many-one reductions from the problems to the halting problem.
Using extraction, a reduction to the halting problem for \L is straightforward: It suffices to prove enumerability in Coq, which follows a clear scheme, and then extract the Coq enumerator automatically to \L.
We demonstrate the power of this method by reducing solvability of Diophantine equations to the halting problem of \L.

Lastly, it might be beneficial to use classical axioms like choice when verifying reductions.
Since the computability of all definable functions does not necessarily hold given classical assumptions, one can extract the used reductions to \L to ensure their computability.

\textsf{\textbf{Related Work}}
Myreen and Owens~\cite{myreen2014proof} implement a proof-producing translation from the higher-order logic implemented in the HOL4 system with a state-and-exception monad into CakeML~\cite{kumar2014cakeml}.
The translation also produces proofs for the translated terms, similar to our approach.
Hupel and Nipkow~\cite{hupel2018verified} give a verified compiler from a deep embedding of Isabelle/HOL to CakeML.
Similar to our work, they use a logical relation to connect Isabelle definitions to an intermediate representation.

Mullen et al.~\cite{mullen2018oeuf} provide a verified compiler from a subset of Coq to assembly.
Anand et al.~\cite{anand2017certicoq} report on ongoing work on verifying the full extraction process of Coq, also based on the MetaCoq framework.
They extract Coq functions into Clight, an intermediate language of the CompCert compiler, and are thus able to obtain verified assembly code for Coq functions.
Letouzey~\cite{letouzey2004certified} describes the theoretical foundations of extraction in Coq.
Our logical relation can be seen as a light-weight version of his simulation predicate for simple polymorphic types.

Köpp~\cite{koepp2018} verifies program extraction for functions in the Minlog proof assistant into a $\lambda$-calculus-like system.

Guéneau et al.~\cite{gueneau2018fistful} verify the asymptotic complexity of functional programs in Coq, based on separation logic with time credits.

We have reported on a preliminary version of our extraction plugin in~\cite{CoqWS}.

\section{The call-by-value $\lambda$-calculus \L}
\label{sec:L}

We use the weak call-by-value $\lambda$-calculus \L defined in~\cite{FS} and based on~\cite{Plotkin75, DalLagoMartini08} as target language.
It comes with an inductive type of terms
\begin{align*}
  s,t,u,v
  &~:~\term~::=~
    n\mid st\mid\lambda s
  \qquad(n:\nat)
\end{align*}
and a recursive function \emph{$\subst sku$}
providing a simple, capturing \emph{substitution} operation:
\begin{align*}
  \subst kku
  &~:=~u
  &
  \subst nku
  &~:=~n
  \tag{if $~n\neq k$}
  \\
  \subst{(st)}ku
  &~:=~(\subst sku)(\subst tku)
  &
  \subst{(\lambda s)}ku
  &~:=~\lambda(\subst s{1 + k}u)
\end{align*}

We will freely switch between a named representation for examples and the representation using de Bruijn indices for definitions, i.e.\ we write $\lambda x y.x$ for $\lambda\lambda 1$.

We define an inductive weak call-by-value
\emph{reduction relation $s\red t$}:
\begin{mathpar}
  \inferrule*{~}{(\lambda s) (\lambda t)\red\subst s0{\lambda t}}
  \and
  \inferrule*{s\red s'} {st\red s't}
  \and
  \inferrule* {t\red t'} {s t\red s t'}
\end{mathpar}
We write $\red^*$ for the reflexive transitive closure of $\red$,
$\red^k$ for exactly and $\red^{\leq k}$ for at~most~$k$~steps.

Note that -- contrary to Coq reduction -- \L-reduction does not apply below binders.
Due to the capturing substitution relation, reduction is only well-behaved on closed terms.
We call a term \emph{closed} if it has no free variables. Closed
abstractions are called \emph{procedures} and are the (only) normal forms of normalising, closed terms.

\L provides for recursion using a fixed-point operator:
\setCoqFilename{Tactics.Lsimpl}
\begin{lemma}[Fact 6~\cite{FS}][rho_correct]\label{lem:rho}
  There is a function $\rho : \term \to \term$ s.t. $(\rho u)v$ reduces to $u(\rho u)v$ for procedures $u, v$.
\end{lemma}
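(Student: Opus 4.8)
The plan is to exhibit the standard call-by-value fixed-point combinator and verify the two reduction steps directly. Concretely, put $A_u := \lambda x\,y.\,u\,(x\,x)\,y$ and define $\rho\,u := A_u\,A_u$ (in de Bruijn terms $A_u = \lambda\lambda\,u\,(1\,1)\,0$, with the free variables of $u$ shifted past the two new binders; since $\rho$ will only be applied to closed $u$ this shift is the identity). The $\eta$-expansion hidden in the ``$\lambda y.\,\dots\,y$'' is the whole point: the naive Turing combinator $(\lambda x\,y.\,y\,(x\,x))(\lambda x\,y.\,y\,(x\,x))$ fails under call-by-value, because $x\,x$ would have to be evaluated eagerly and the reduction diverges; wrapping the recursive occurrence in an abstraction turns it into a value that is only unfolded once it is actually applied to an argument.

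First I would establish the unfolding step $\rho\,u \red \lambda y.\,u\,(\rho\,u)\,y$. This is a single $\beta$-step, $A_u\,A_u \red \subst{(\lambda y.\,u\,(x\,x)\,y)}{x}{A_u}$, and what remains is to check that the capturing substitution produces the intended term, i.e.\ that the substituted copy of $A_u$ reappears unchanged as $\rho\,u = A_u\,A_u$ and that $u$ is left untouched. This is exactly where the hypothesis that $u$ is a procedure — in particular \emph{closed} — is used: closedness guarantees that the index shifts performed by $\subst{\cdot}{\cdot}{\cdot}$ act as the identity on $u$, so no capture occurs and the result is literally $\lambda y.\,u\,(\rho\,u)\,y$.

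From the unfolding step, $(\rho\,u)\,v \red (\lambda y.\,u\,(\rho\,u)\,y)\,v$ by the congruence rule for reduction in function position, and then $(\lambda y.\,u\,(\rho\,u)\,y)\,v \red u\,(\rho\,u)\,v$ by one further $\beta$-step, which fires under call-by-value precisely because $v$ is a procedure and hence an abstraction (a value). Chaining the two steps gives $(\rho\,u)\,v \red^* u\,(\rho\,u)\,v$, in fact in exactly two steps; closedness of $u$ and $v$ also ensures that $\rho\,u$ and the whole term remain closed. The computation is routine — the only genuine subtlety, and the one spot where a formalisation can trip up, is the interplay of the call-by-value strategy with the $\eta$-expansion together with correct de Bruijn bookkeeping under the capturing substitution, both of which are discharged by appealing to closedness of $u$ and $v$.
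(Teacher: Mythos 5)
Your construction is correct and proves the lemma exactly as stated: with $A_u := \lambda x\,y.\,u\,(x\,x)\,y$ and $\rho\,u := A_u\,A_u$, the two $\beta$-steps you describe go through in \L (the first by the left congruence rule, the second because $v$ is an abstraction), and closedness of $u$ indeed makes the capturing de Bruijn substitution harmless. Note, however, that the paper does not prove this lemma itself --- it imports it as Fact~6 of the cited development --- and the combinator actually used there is shaped differently in a way that matters downstream: your $\rho\,u = A_u\,A_u$ is an \emph{application}, not an abstraction, so it is not a procedure, whereas the plugin applies $\rho$ to translate \texttt{fix} and the correctness relation $\computes{f}{t_f}$ explicitly demands that $t_f$ be a procedure (and, for the step-counting machinery, that $\rho\,u$ be a value that $u$ can consume by $\beta$ without further evaluation of the argument). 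The naive repair $\rho\,u := \lambda y.\,A_u\,A_u\,y$ breaks the self-reference, since unfolding then produces $u\,(A_u\,A_u)\,v$ rather than $u\,(\rho\,u)\,v$; the standard fix is to thread $u$ as a run-time argument instead of baking it into the combinator, e.g.\ $r := \lambda a\,u\,y.\,u\,(\lambda z.\,a\,a\,u\,z)\,y$ and $\rho\,u := \lambda y.\,r\,r\,u\,y$, so that the recursive occurrence $\lambda z.\,r\,r\,u\,z$ is syntactically identical to $\rho\,u$ and $\rho\,u$ is a closed abstraction. So: no gap for the statement as literally given, but your variant would not be usable as the $\rho$ of the extraction pipeline without this extra twist.
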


Inductive datatypes can be encoded using Scott encodings~\cite{Mogensen1992,Jansen2013}, which we explain in \Cref{sec:extract-constructor}.

One crucial property of \L reduction is that it is uniformly confluent, making every reduction to a normal form have the same length:

\begin{theorem}[Corollary 8 \cite{FS}]
  If $s \red^{k_1} v_1$, $s \red^{k_2} v_2$ for procedures $v_i$, then $v_1 = v_2 \land k_1 = k_2$.
\end{theorem}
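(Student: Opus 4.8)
The plan is to deduce this from the uniform confluence of $\red$ stated just above, i.e.\ from the fact that whenever $s \red t_1$ and $s \red t_2$, then either $t_1 = t_2$ or there is some $u$ with $t_1 \red u$ and $t_2 \red u$ in one step each. If this property is not already available from~\cite{FS}, I would first establish it by induction on $s$, with a case analysis on the two single steps: if $s = s_1 s_2$ is itself a redex $(\lambda p)(\lambda q)$, then both immediate subterms are procedures and hence irreducible, so the only possible step is the head $\beta$-step and $t_1 = t_2$; otherwise both steps happen inside $s_1$ or inside $s_2$, and either they act on the same side (apply the inductive hypothesis) or on different sides (the reducts are joined by the obvious diamond $s_1' s_2'$). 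The abstraction case is vacuous since $\red$ does not reduce under binders.

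Next I would prove a strip lemma: if $s \red^k v$ with $v$ a normal form and $s \red t$, then $k \geq 1$ and $t \red^{k-1} v$. This goes by induction on $k$. The case $k = 0$ is impossible, since then $s = v$ would be normal yet reduce. For $k = 1 + k'$, write $s \red t' \red^{k'} v$ and apply uniform confluence to $s \red t$ and $s \red t'$: either $t = t'$, and then $t \red^{k'} v$ is what we want, or we obtain $u$ with $t \red u$ and $t' \red u$; the inductive hypothesis applied to $t' \red^{k'} v$ and $t' \red u$ yields $k' \geq 1$ and $u \red^{k'-1} v$, whence $t \red u \red^{k'-1} v$ gives $t \red^{k'} v$ as required.

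Finally the theorem follows by induction on $k_1$. If $k_1 = 0$, then $s = v_1$ is normal, so $s$ admits no step; hence $k_2 = 0$ as well and $v_2 = s = v_1$. If $k_1 = 1 + k_1'$, write $s \red t \red^{k_1'} v_1$. Since $v_2$ is normal and $s \red^{k_2} v_2$, the strip lemma applied to $s \red t$ gives $k_2 \geq 1$ and $t \red^{k_2-1} v_2$. The inductive hypothesis on $t \red^{k_1'} v_1$ and $t \red^{k_2-1} v_2$ then yields $v_1 = v_2$ and $k_1' = k_2 - 1$, i.e.\ $k_1 = k_2$.

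The main obstacle is the uniform-confluence step itself: it is the only place requiring genuine inspection of the reduction rules, and one must check that the call-by-value restriction (arguments must already be procedures before the head redex fires) together with the weakness of the calculus (no reduction under $\lambda$) rule out the critical pairs that would otherwise force the joining to take more than one step on each side. Everything after that is routine induction on natural numbers. If~\cite{FS} already supplies uniform confluence as phrased in the sentence preceding the theorem, this obstacle disappears and the proof reduces to the strip lemma and the final induction above.
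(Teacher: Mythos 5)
Your proof is correct and takes essentially the approach the paper intends: the result is imported as Corollary~8 of~\cite{FS}, where it is derived from uniform confluence (the one-step diamond-or-equal property you state) via exactly the strip lemma and induction on the step count that you give. The paper offers no proof of its own beyond the citation, and the sentence preceding the theorem already names uniform confluence as the key ingredient, so your only genuine work item --- checking the diamond property by case analysis on the weak call-by-value rules --- is the same obstacle the cited source discharges.
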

\newcommand\unit{\mathbbm{1}}
For the remainder of this paper, we will write $\Type$ for the type of types in Coq, $\Prop$ for the type of propositions, $\List X$ and $\Option X$ for lists and options over $X$, and $\unit$ (with $\star : \unit$) for the unit~type.

\section{Correctness and time bounds}
\label{sec:corr}
\setCoqFilename{Tactics.Computable}
\newcommand{\computes}[3][]{#3 \sim^{#1} #2}
\newcommand{\encf}[1]{\varepsilon_{#1}}

We define when a term computes a Coq function using two logical relations, one considering just correctness, and one correctness with time bounds.
Crucial for both definitions is the notion of an encoding function:

\begin{definition}[][registered]
  A function $\encf A : A \to \term$ is an encoding function for a type  $A$ if $\encf A$ is injective and only returns procedures.
\end{definition}

Notice that the only types where such a function can be defined are computationally relevant (i.e. non-propositional), countable types like $\bool$, $\nat$, $\Option X$,
or $\List X$ over countable $X$.

\subsection{Correctness}

We define a logical relation $\computes a {t_a}$, meaning the \L-term $t_a$ correctly computes $a$.
We will only define this predicate for elements $a : A$ where $A$ is a simple type of the form $A_1 \to \dots \to A_n$. 

We define the predicate $\computes a {t_a}$ as follows:
\begin{mathpar}
  \infer{~}{\computes a {\encf A a}} \quad (\text{for } a : A)
  
  \infer{ t_f \text{ is a procedure} ~\land \\\\ \forall a t_a.~ \computes a {t_a} \to 
    \Sigma v : \term.~ t_f
    t_a \red^* v \land \computes {f a} v    }{\computes f {t_f}} \quad  (\text{for $f : A \to B$})
\end{mathpar}

For elements $a : A $ for encodable types $A$ the only term computing them is their encoding.
Functions $f : A \to B$ are computed by a procedure $t_f$, if for every $a : A$ computed by $t_a$ the term $t_f\ t_a$ computes $f a$.
Note that we could alternatively define the first rule s.t.~every term $t$ convertible to the encoding $\encf A a$ computes $a$, and then simplify the second rule to read $\computes {f a}{t_f t_a}$.
While technically correct, this simplification does not work for the extension of the relation with time complexity.
We thus stick with the more complicated second rule where we require a term $v$ (using the type theoretical sum $\Sigma$\footnote{for non type-theoriest, $\Sigma$ can be read as a computable existential quantifier}) s.t. $t_f t_a$ reduces to $v$ and $v$ computes $f a$.

Defining this predicate in Coq is not entirely straightforward.
As common when defining logical relations, the definition is not strictly positive and thus not accepted by Coq as inductive predicate.
The standard approach for non strictly positive predicates is to translate them into a recursive function.
However, here we would need recursion over types, which is not supported in Coq's type theory.%
\newcommand\TT{\mathfrak{T}}
We circumvent this restrictions by defining a type former $\TT : \Type \to \Type$ capturing exactly the types we want to recurse on and  define the predicate by recursion on $\texttt{ty} : \TT A$:
\begin{lstlisting}
Inductive $\TT$ : Type -> Type :=
  $\TT$_base A `{registered A} : $\TT$ A (* base types *)
| $\TT$_arr A B ($\texttt{ty}_1$ : $\TT$ A) ($\texttt{ty}_2$ : $\TT$ B) : $\TT$ (A -> B). (* functions types *)

Fixpoint computes {A} ($\texttt{ty}$ : $\TT$ A) {struct $\texttt{ty}$}: A -> L.term -> Type :=
  match $\texttt{ty}$ with
    $\TT$_base => fun x ext => (ext = enc x)
  | $\TT$_arr A B $\texttt{ty}_1$ $\texttt{ty}_2$ =>  fun f t_f  => proc t_f * (* t_f is closed and normal *)
                                      forall (a : A) t_a, computes $\texttt{ty}_1$ a t_a ->
                                      {v : term &  (* there exist$$s a term v*)
                                      (t_f t_a >* v)* computes $\texttt{ty}_2$ (f a) v} end.
\end{lstlisting}
The first constructor of $\TT$ takes every encodable type as argument, denoted in Coq by the \coq{registered} type class, which we explain in \Cref{sec:gen-enc}.
The second constructor captures exactly non-dependent functions.
The definition of \coq{computes} then exactly captures the inductive rules given above\rlap.\footnote{Note that $\texttt{\{}$\coq!v : term $\,\,\texttt{\&}\,\,$ P v !$\texttt{\}}$ is Coq-notation for a dependent pair.}
By making $\TT$ a type class, instances $\texttt{ty}$ can always be obtained automatically.

As a running example, we will use the function \coq{map}$~{X~Y} : (X \to Y) \to \List X \to \List Y$ on lists for fixed types $X$ and $Y$.
We assume that $X$, $Y$, $\List X$ and $\List Y$ are all encodable.
Then $\computes {\texttt{map}~{X~Y}} t$ is equivalent to $t$ being a procedure and the proposition $\forall (f : X \to Y) (t_f : \term) (L : \List X).~ \computes f {t_f} \to t\ t_f\, (\encf {} L) \red^* \encf{} (\texttt{map}~{X~Y}f\ L)$.

Note that $\computes {} {}$ is defined similarly to $\llbracket \cdot \rrbracket_{\mathbbm{2}}$ on inductives and functions in~\cite{letouzey2004certified}.

\subsection{Time bounds}\label{sec:time-bounds}
\newcommand{\timec}[1]{\tau_{#1}}

We extend the computability predicate to include time bounds.
As time measure for a term we use its number of $\beta$-steps to a normal form, which is shown reasonable in~\cite{FKR}.
The time bound is expressed depending on the input itself, not its size: e.g.\ for $f : \List \nat \to \bool$ with $\computes f {t_f}$, we want to have a time complexity function $\timec f : \List \nat \to \nat$ such that $\forall L : \List \nat.~ t_f (\encf{} L) \red^{\leq (\timec f L)} \encf{} (f L)$.

We generalise this idea to also account for higher-order functions and define the type $\mathcal{C}$ of complexity measures $\timec a$ for $a : A$ as follows:
\begin{display}
  \mathcal{C} A := \unit  \quad\quad  \mathcal{C} (A \to B) := A \to \mathcal{C} A \to \nat \times \mathcal{C} B
\end{display}

Given the term \coq{map}$~{X~Y}$ of type $(X \to Y) \to \List X \to \List Y$ as above, its complexity measure~$\timec {\coqm{map}~{X~Y}}$ will be
$ (X \to Y) \to (X \to \unit \to \nat \times \unit) \to \nat \times (\List X \to \unit \to \nat \times \unit)$,
which is equivalent to
$(X \to Y) \to (X \to \nat) \to \nat \times (\List X \to \nat)$,
i.e.\ it is a function that, given an argument $f : X \to Y$ and a complexity measure $\timec f : \mathcal{C}(X \to Y)$ (being equivalent to $X \to \nat$), returns a pair of the number of steps $\texttt{map}\, f$ needs to (partially) evaluate, and a function that for $L : \List X$ computes the remaining number of steps $\texttt{map}\, f\, L$ needs to evaluate.

We can extend the computability predicate with time bounds into a predicate $\computes[\timec{a}] a {t_a}$:%
\begin{mathpar}
  \infer{ ~ }{\computes[\timec {}] a {\encf A a}} \quad (\text{for } a : A)
  
  \infer{t_f \text{ is a procedure}~\land \\\\ \forall a t_a \timec a.~ \computes[\timec a] a {t_a} \to \Sigma v : \term. \\\\
    t_f t_a \red^{\leq n} v \land \computes[\timec{}]{f a} v
    \text{ where } \timec f a \timec a = (n, \timec {})    }{\computes[\timec a] f {t_f}} \quad  (\text{for $f : A \to B$})
\end{mathpar}

The first rule is essentially unchanged:
Since encoded terms $\encf A a$ are always normal, $\computes[\timec {}] a {\encf A a}$ holds for every complexity measure $\timec {}$.
For the second rule, we decompose $\timec f a \timec a$ into $n$ and $\timec {}$.
The complexity measure $\timec {} : \mathcal{C} B$ is the complexity measure for $\computes[\timec{}] {f a} v$ and $n$ is the number of steps $t_f~t_a$ needs to reach $v$.

Similar to before, we implement the predicate by recursion on an element of $\TT A$:
\begin{lstlisting}
Fixpoint computesTime {A} (ty : $\TT$ A) {struct ty}: A -> L.term ->$\mathcal{C}$ A -> Type := (* ... *).
\end{lstlisting}

\section{Extraction}
\label{sec:extract}

We describe the different tools needed to extract functions, constructors and to generate encoding functions.

\subsection{Template-Coq}
\label{sec:template-coq}

Template-Coq is a quoting library for Coq, now part of the MetaCoq project and originally developed by Malecha~\cite{malecha2015extensible}.
The current state of the project is explained by Anand et al.\ \cite{anand2018towards} and Boulier~\cite{boulier2018extending}.

Template-Coq provides an inductive type \coq{term} implementing the abstract syntax of Coq as an inductive type (\cref{fig:template-coq-term}).
It comes with a monad \coq{TemplateMonad : Type -> Prop} (\cref{fig:template-coq}) which allows operations like quoting (i.e.\ converting Coq terms into their abstract syntax tree), unquoting (i.e.\ converting abstract syntax trees into Coq terms), evaluating terms, and making definitions.
An operation \coq{m : TemplateMonad A} can be executed using the \coq{Run TemplateProgram m} vernacular command.

As an example, the following function obtains the type of its input by unquoting it into a pair of a type and an element, projecting out the type and returning its quotation:
\begin{lstlisting}
Definition tmTypeOf (s : term) :=
  u <- tmUnquote s ;;
  u'<-  tmEval hnf (my_projT1 u) ;;
  t <- tmQuote u';;
  ret t
\end{lstlisting}

\begin{figure}[h]
  \centering
  \begin{subfigure}{\textwidth}
\begin{lstlisting}
Inductive term : Set :=
| tRel       : nat -> term
| tLambda    : name -> term (* the type *) -> term -> term
| tLetIn     : name -> term (* the term *) -> term (* the type *) -> term -> term
| tApp       : term -> list term -> term
| tConst     : kername -> universe_instance -> term
| tConstruct : inductive -> nat -> universe_instance -> term
| tCase      : (inductive * nat) (* num of parameters *) -> 
                term (* type info *) -> term (* discriminee *) ->
                list (nat * term) (* branches *) -> term
| tFix       : term -> nat -> term
(* ... *).
\end{lstlisting}
    \subcaption{Term representation}
    \label{fig:template-coq-term}
  \end{subfigure}
    \begin{subfigure}{\textwidth}
\begin{lstlisting}
Inductive TemplateMonad : Type -> Prop :=
(* Monadic operations *)
| tmReturn : forall {A:Type}, A -> TemplateMonad A
| tmBind : forall {A B : Type}, TemplateMonad A -> 
               (A -> TemplateMonad B) -> TemplateMonad B
(* General commands *)
| tmPrint : forall {A:Type}, A -> TemplateMonad unit
| tmFail : forall {A:Type}, string -> TemplateMonad A
| tmEval : reductionStrategy -> forall {A:Type}, A -> TemplateMonad A
(* Return the defined constant *)
| tmDefinitionRed : ident -> option reductionStrategy -> forall {A:Type}, A -> TemplateMonad A
| tmLemmaRed : ident -> option reductionStrategy -> forall A, TemplateMonad A
(* Quoting and unquoting commands *)
| tmQuote : forall {A:Type}, A  -> TemplateMonad term
| tmUnquote : term  -> TemplateMonad {T : Type & T}
| tmUnquoteTyped : forall A, term -> TemplateMonad A
\end{lstlisting}
    \subcaption{Monad operations}
    \label{fig:template-coq}
  \end{subfigure}
\caption{Template-Coq's definitions}
\end{figure}

\subsection{Extracting Terms}\label{sec:extract-term}

We define a monadic function \coq{extract} which can extract admissible Coq terms into \L.
In order to extract a Coq term, all the constants appearing in it have to be extracted.
To save work, we remember previously generated extracts, similar to Anand et al.~\cite{anand2018towards}, who use explicit dictionaries for this task.
We employ Coq's type class mechanism instead of dictionaries:
\begin{lstlisting}
Class extracted {A : Type} (a : A) := int_ext : L.term.
\end{lstlisting}
This also defines a function \coq|int_ext| which allows referring to the extracted term corresponding to \coq{a} as \coq{int_ext a}, if it exists, and otherwise get an error.

We restrict the terms we can extract to admissible terms:

\begin{definition}\label{def:adm}
A type $A$ is admissible if $A$ is of the form $\forall X_1 \dots X_n : \Type.~B_1 \to \dots \to B_m$ with $B_m \neq \Type$.
Terms $a : A$ are admissible if $A$ is admissible and if all constants $c : C$ that are proper subterms of $a$ are either
\begin{enumerate}
\item admissable and occur syntactically on the left hand side of an application fully instantiating the type-parameters of $c$ with constants or
\item of type $\Type$ and occur syntactically on the right hand side of an application instantiating type parameters.
\end{enumerate}
\end{definition}

This means a type $A$ is admissible if it has no quantification over terms, quantification over types in $A$ is in prenex normal form and the return type of $A$ is not $\Type$.
The function \coq{map} (\cref{fig:map_vec}) for instance is admissible.
The only constants appearing in its body are \coq{nil} and \coq{cons}, which are both admissible and occur fully instantiated.

\begin{figure}
  \centering
\begin{lstlisting}
Definition map (A B : Type) : (A -> B) -> list A -> list B := fun f =>
  fix map := match l with | [] => @@nil B | a :: t => @@cons B (f a) (map l) end
\end{lstlisting}
  \caption{Definition of \coq{map : forall A B : Type, (A -> B) -> list A -> list B}}
  \label{fig:map_vec}
\end{figure}

We define an extraction function which correctly extracts admissible
terms of a type without type-parameters. If we want to extract
polymorphic functions like \coq{map} we use Coq's section
mechanism and fix the types \coq{A} and \coq{B} as section variables and extract \coq{map A B}. 

The type of the extraction function is
\begin{display}
  \quad\quad\quad\quad\text{\coq{extract : (nat -> nat) -> term -> nat ->  TemplateMonad L.term}}
\end{display}

The first argument is an environment argument which tracks lifting
information for de Bruijn indices for the treatment of fixed points.
The last argument is a fuel argument, needed because recursion on the
right-hand constituents of an application is not structurally
recursive.

Dealing with variables and binders is relatively straightforward,
since Template-Coq already uses a de Bruijn representation of
terms. Variables translate directly to variables, functions to
$\lambda$ and fixed points can be translated using $\rho$ from \ref{lem:rho}.
We have to lift variables when entering an abstraction using the standard de Bruijn lifiting operation ($\uparrow$):
\begin{lstlisting}
Notation "lift  E" := (fun n => match n with 0 => 0 | S n => S (E n) end).

Fixpoint extract env s fuel :=
  match fuel with 0 => tmFail "out of fuel" | S fuel =>
  match s with
    Ast.tRel n => t <- tmEval cbv (var (env n));; ret t
  | Ast.tLambda _ _ s => t <- extract (lift env) s fuel ;; ret (lam t)
  | Ast.tFix [BasicAst.mkdef _ nm ty s _] _ =>  
      t <- extract (fun n => S (env n)) (Ast.tLambda nm ty s) fuel ;;  ret (rho t)
\end{lstlisting}

In order to extract applications \coq{s R} (where \coq{R} is a list of all arguments), we count the number of type parameters of \coq{s}.
If it has none, extraction is straightforward recursion.
We extract \coq{s R} by folding over the list \coq{R} as the application of the extraction of all subterms:
\begin{lstlisting}
  | Ast.tApp s R =>
    p <- tmDependentArgs s;;
    if p =? 0 then
      t <- extract env s fuel;;
      monad_fold_left (fun t1 s2 => t2 <- extract env s2 fuel ;; ret (app t1 t2)) R t
\end{lstlisting}
If \coq{s} has \coq{p > 0} type parameters, we assume that it is the syntax of a previously extracted constant.
We split \coq{R} into type parameters \coq{P} and the list of computational arguments \coq{L} and unquote \coq{tApp s P} as \coq{a}.
We then obtain an extraction \coq{t} for the constant \coq{a} using the \coq{tmTryInfer} operation invoking type class search.
Finally, we again recursively extract by folding over the list of arguments \coq{L}:
\begin{lstlisting}
    else
      let (P, L) := (firstn p R, skipn p R)  in
      s' <- tmEval cbv (Ast.tApp s P);;
      (if closedn 0 s' 
         then ret tt 
         else tmFail "The term contains variables as type parameters.");;
      a <- tmUnquote s' ;;
      a' <- tmEval cbn (my_projT2 a);;
      n <- (tmEval cbv (String.append (name_of s) "_term") >>= tmFreshName) ;;
      i <- tmTryInfer n (Some cbn) (extracted a') ;;
      let t := (@@int_ext _ _ i) in
      monad_fold_left (fun t1 s2 => t2 <- extract env s2 fuel ;; ret (app t1 t2)) L t                             
\end{lstlisting}

For all other syntactic constructs we refer to the Coq code.

We wrap the extraction function into an operation which adds definitions:
\begin{lstlisting}
Definition tmExtract (nm : option string) {A} (a : A) : TemplateMonad L.term :=
  q <- tmUnfoldTerm a ;;
  t <- extract (fun x => x) q FUEL ;;
  match nm with
    Some nm => nm <- tmFreshName nm ;;
                 @@tmDefinitionRed nm None (extracted a) t ;;
                 tmExistingInstance nm;;ret t
  | None => ret t
  end.
\end{lstlisting}

\subsection{Generation of Scott encodings} \label{sec:extract-constructor}

We use Scott encodings~\cite{Mogensen1992,Jansen2013} to encode inductive types and its constructors.
Scott encodings represent the matches on the inductive type.
For instance, the Scott encoding of the booleans are $\encf{\bool} {\texttt{true}} = \lambda x y.x$ and $\encf{\bool} {\texttt{false}} = \lambda x y.y$.
For natural numbers, the encodings are $\encf{\nat} 0 = \lambda z s.z$ and $\encf{\nat} (S n) = \lambda z s.s (\encf{\nat} n)$.

As before, we use type classes to remember previously generated encodings:
\begin{lstlisting}
Class encodable (A : Type) := enc_f : A -> L.term.  
Class registered (A : Type) := mk_registered
  { enc :> encodable A ;             (* the encoding function for A *)
    proc_enc : forall a, proc (enc a);  (* encodings are procedures *)
    inj_enc : injective enc          (* encoding is injective *) }.
\end{lstlisting}

For an inductive type with $n$ constructors, the constructor of
index $i$ which takes $a$ arguments has Scott encoding
$\texttt{gen\_constructor}\ a\ n\ i := \lambda x_1 \dots x_a.\lambda y_1 \dots y_n. y_i x_1 \dots x_a$. 

For natural numbers (a type with two constructors, i.e. $n = 2$), the constructor $S$ (which has index $i =
1$ and takes one argument, i.e. $a = 1$) has encoding $\lambda x. \lambda y_1 y_2. y_2 x$ (or $\lambda \lambda \lambda (0 2)$).

We use \coq{gen_constructor} to define a monadic operation \coq{tmExtractConstr}. 
If we want to extract \coq{map}, we first extract the two constants occurring in its definition (i.e.\ \coq{nil} and \coq{cons}) and then the actual function, always fully applied to their type parameters:
\begin{lstlisting}
Section Fix_X_Y.
  Context { X Y : Set }. Context { encY : encodable Y }.

  Run TemplateProgram (tmExtractConstr "nil_term" (@@nil X)).
  Run TemplateProgram (tmExtractConstr "cons_term" (@@cons X)).
  Run TemplateProgram (tmExtract "map_term" (@@map X Y)).
End Fix_X_Y.
\end{lstlisting}

\subsection{Generation of Encoding Functions}\label{sec:gen-enc}

We restrict our generation of encoding functions to simple inductive types of the form
\begin{lstlisting}
  Inductive T (X1 ... Xp : Type) : Type :=
  (* ... *) | constr_i_T : A1 -> ... -> An -> T X1 ... Xp | (* ... *).
\end{lstlisting}

where \coq{Aj} for $1 \leq \texttt{j} \leq \texttt{n}$ is either encodable or exactly \coq{T X1 ... Xn}.

For a fully instantiated inductive type \coq{B = T X1 ... Xp} with \coq{n} constructors we define the encoding function $\encf {\coqm B}$ as follows:
\begin{lstlisting}
fix f (b : B) := match b with 
 | ... | constr_i_T (x1 : A1) ... (xn : An) =>  $\lambda y_1 \dots y_{\texttt p}. y_{\texttt i}$ (f1 x1 ) ... (fn xn) | ... end
\end{lstlisting}
where \coq{fj} for $1 \leq \coqm{j} \leq \coqm{n}$ is a recursive call \coq{f} if \coq{Aj = B}, or $\encf {\coqm{Aj}}$ otherwise.
We implement a monadic function \coq{tmEncode} which can be used like this:
\begin{lstlisting}
Section Fix_X.
  Variable (X:Type). Context {intX : registered X}.
  Run TemplateProgram (tmEncode "list_enc" (list X)).
End Fix_X.
\end{lstlisting}

Note that in principle, more types are Scott-encodable, but we leave the
automatic generation for those types to future work.

\subsection{Extraction in Coq}

To be able to connect extracts $t_a$ to terms $a$ using the predicates $\computes a {t_a}$ and $\computes[\timec a] a {t_a}$ we define two type classes:
The class \coq{computable} is parameterised over $a$ and contains an extracted term $t_a : \term$ and a proof of $\computes a {t_a}$.
The class \coq{computableTime} is in addition parametrised over a time complexity function $\timec{a}$:
\begin{lstlisting}
Class computable {A : Type} {ty : $\TT$ A} (a : A) : Type :=
  { ext :> extracted a; 
    extCorrect : computes ty a ext }.

Class computableTime {A : Type} (ty : $\TT$ A) (a : A): Type :=
  { extT : extracted a; evalTime : $\mathcal{C}$ A ;
    extTCorrect : computesTime ty a extT evalTime }.
\end{lstlisting}

This way, we can write \coq{ext a} or \coq{extT a} for previously extracted terms $t_a$.
Note that since all relevant information can be obtained through the paramaters and the types of the fields, we can leave all instances of this classes opaque in Coq.

\section{Automated Verification}
\label{sec:verification}

We now give an overview over the set of tactics we provide in our framework.
All tactics are written in Ltac only, but some of them use the monadic operations explained in the last section.
We first explain the tactics to simplify \L-terms.
We then show how to register inductive datatypes to be used with the framework.
Lastly, we explain how to prove the computability relation $\computes a {t_a}$ and infer recurrence equations for a time bound $\timec a$.

\subsection{Symbolic Simplification for \L}\label{sec:symb-simp}

All tactics in this section are concerned with proving goals of the form ``$s$ is a procedure'' or ``$s$ reduces to $t$'', or transforming a goal like ``$s$ reduces to $t$'' to ``$s'$ reduces to $t$'' by simplifying $s$ to $s'$.
While all terms $s$ we simplify will be closed, they might not be concrete terms, e.g.\ contain the encoding of an arbitrary natural number.
The tactics will not unfold definitions.

\textbf{\coq{Lproc}:} The tactic \coq{Lproc} can prove that a term is closed, an abstraction or a procedure. It syntactically decomposes the term and uses a hint database for easier extensibility.%

\textbf{\coq{Lbeta}:} The tactic \coq{Lbeta} simplifies \L-terms by reducing all $\beta$-redices of the form $(\lambda s) t$ which are visible without unfolding definitions.
It uses \coq{Lproc} to show that $t$ is a procedure and that folded definitions used in $s$ are closed, thus left unchanged by the substitution.
\coq{Lbeta} is implemented by reflection, treating names as opaque and using closures to evaluate big terms more efficiently.
It can keep track of the number of beta-reductions performed.
For example, it simplifies the \L-term $(\lambda x y.\,x y y)\,u\,v$ in 2 steps to $u\,v\,v$.

\textbf{\coq{Lrewrite}:} The tactic \coq{Lrewrite} simplifies terms by the use of a hint database with the same name, containing the correctness statements for previously extracted terms, and by the use of local assumptions, which are important for recursion.
For efficiency reasons, it does not use Coq's built-in rewriting and instead traverses terms to find subterms where a hint from the database is applicable.
For example, it simplifies the \L-term $t_{+} (t_{+} (\encf \nat\ x) (\encf \nat\ 5)) (\encf \nat\ y)$ to $\encf \nat (x + 5 + y)$.
While traversing, \coq{Lrewrite} replaces occurences of $t_y$ with $y:Y$ of registered type by the trivial instance with extraction $\encf Y y$.
This guarantees canonicity of instances of \coq{computable} for registered types.

Additionally, \coq{Lrewrite} simplifies $t_f\,t_x$ to $t_{f x}$ for $x:X$ and $f:X\to Y$.
The concrete instance of $\coqm{computable} (f x)$ is constructed by combining the instances for $f$ and~$x$.

\textbf{\coq{Lsimpl}:} The tactic \coq{Lsimpl} repeatedly applies \coq{Lbeta} and \coq{Lrewrite} in alternation and can solve trivial goals by reflexivity.

\textbf{Time bounds:} All tactics can be used to analyse time bounds as well: \coq{Lbeta}, \coq{Lrewrite}, and \coq{Lsimpl} transform goals of the form $s \red^{\coqm{?k}} t$ to goals of the form $s' \red^{\coqm{?k}} t$ for an $s'$ with $s \red^{\coqm{k1}} s'$, instantiating the existential variable \coq{?k} with \coq{k1 + ?k'}.

\subsection{Registering Inductive Datatypes}\label{sec:add-ind-type}

To register an inductive datatype we provide the monadic operation \coq{tmGenEncode : ident -> Type -> TemplateMonad unit}:
\begin{lstlisting}
Run TemplateProgram (tmGenEncode "nat_enc" nat).
Hint Resolve nat_enc_correct : Lrewrite.
\end{lstlisting}
The operation generates the encoding function and three obligations, which are discharged automatically\rlap.\footnote{Using \coq{Global Obligation Tactic} of the \coq{Program} mode shipped with Coq.}
The first and second obligation regard procedureness and injectivity of the generated encoding function by tactics \coq{register_proc} and \coq{register_inj}.

The third obligation is saved as \coq{nat_enc_correct} and is generated similarly to the encoding function.
It states that the encoding behaves like Scott encoding and is also proven automatically, using the tactic \coq{extract match}.
In the case of natural numbers, it has the following type: \coq{nat_enc_correct : forall (n:nat) (s t:term), proc s -> proc t -> enc n s t $\red^{\leq 2}$ match n with 0 => s | S n' => t (enc n') end}.
The lemma has to be registered in the hint database \coq{Lrewrite} manually in order to be used by our tactics.

To work with an inductive type, a user also has to extract its constructors.
The constant constructors (e.g.\ $0$ for natural numbers) are trivially computable by their encoding:
\begin{lstlisting}
Instance reg_is_ext ty (R : registered ty) (x : ty) : computable x.
Proof.  exists (enc x). reflexivity. Defined.
\end{lstlisting}

A specific instance is only needed for the functional constructors of inductive data types:
\begin{lstlisting}
Instance term_S : computable S. Proof.  extract constructor. Qed.
\end{lstlisting}
The \coq{extract constructor} tactic extracts constructors as described in \cref{sec:extract-constructor} and show their correctness fully-automatically as described in the next section.%

\subsection{Automatically Proving Correctness} \label{sec:derive-correctness}
\setCoqFilename{Tactics.ComputableDemo}

As an example\footnote{Available as an interactive example in \coqlink[correctness_example]{\coq{Tactics/ComputableDemo.v} as \coq{Example correctness_example}}}, we take the boolean disjunction \coq{orb x y := if x then true else y}.
For the user, the extraction is fully automatic:
\begin{lstlisting}
Instance term_orb : computable orb. Proof.  extract. Qed.
\end{lstlisting}

The tactic \coq{extract} first extracts the Coq term as described in \cref{sec:extract-term}.
In this case, the result is $\lam{xy}{x (\coqm{ext true}) y}$.
The verification is then performed by iterating the tactic \coq{cstep}, where in each step a goal is of the form $\computes {\coqm{f}} s$.
The tactic \coq{cstep} performs simplifications depending on the Coq term \coq{f}.

Here, the initial proof goal reads as follows:
\begin{display}
  \computes {(\text{\coq{fun x y => if x then true else y}})} {(\lam{xy}{x (\coqm{ext true}) y})}
\end{display}

In case the Coq term is of function type and not syntactically a \coq{fix}, \coq{cstep} uses the definition of $\computes{}{}$ on function types and assumes a boolean \coq{x} computed by a term \coq{ext x}.
This yields as intermediate goal the existence of a procedure $v$ with
\begin{display}
  (\lam{xy}{x (\coqm{ext true}) y}) (\coqm{ext x}) \red^* v \text{ and } \computes {(\text{\coq{fun y => if x then true else y}})} {v}
\end{display}

Now \coq{cstep} uses \coq{Lsimpl} to derive $v$ by simplifying the term $(\lam{xy}{x (\coqm{ext true}) y}) (\coqm{ext x})$ to $\lam{y}{(\coqm{ext x}) (\coqm{ext true}) y}$, yielding the proof goal
\begin{display}
  \computes {(\text{\coqm{fun y => if x then true else y}})} {\lam{y}{(\coqm{ext x}) (\coqm{ext true}) y}}
\end{display}

The next call of \coq{cstep} assumes a fixed boolean \coq{y} and simplifies by \mbox{\coq{Lrewrite}:}
\begin{display}
  \computes {\text{\coqm{if x then true else y}}} {\text{\coqm{if x then ext true else ext y}}}
\end{display}

In case the Coq term syntactically has a case distinction on top, \coq{cstep} performs the same case distinction for the proof, here leaving the two goals $\computes {\text{\coqm{true}}} {\text{\coq{ext true}}}$ and $\computes {\text{\coq{y}}} {\text{\coq{ext y}}}$.
In both cases the Coq term is of registered type and the next call of \coq{cstep} proves these goals using the definition of $\computes{}{}$.

\subsubsection{Recursive Functions}\label{sec:derive-rec-fun}

Recall that recursive functions in Coq are defined via the \coq{fix} (or \coq{Fixpoint}) construct, which allows the application of recursive calls to `smaller' arguments, where the notion `smaller' is due to the guardedness checker of Coq.
The tactic \coq{cstep} proves the correctness using \coq{fix} as well, with the same recursive calls as the extracted function.
Therefore, the guardedness checker will accept the proof for exactly the same reasons it accepted the function definition\footnote{The guardedness checker rejects some of our produced proofs when extracting functions not directly structurally recursive: This is due to the additional heuristics in the guardedness checker. }.

As an example\footnote{Available as an interactive example in \coqlink[correct_recursive]{\coq{Tactics/ComputableDemo.v} as \coq{Example correct_recursive}}}, the extraction of \coq{map A B} (see \cref{fig:map_vec}) for registered types \coq{A} and \coq{B} is of shape
$\lam f {\rho~v_1}$ for a procedure $v_1$, where $\rho$ is the fixed-point combinator from \Cref{lem:rho}.

To verify this term, the proof goal is
\begin{display}
  \computes {\coqm{fun f => fix map l := (...)}} {\lam f {\rho~v_1}}
\end{display}

The first call of \coq{cstep} is as in \cref{sec:derive-correctness} and yields the following goal, where $v_2$ is obtained by replacing the \L-variable $f$ with \coq{ext f} for a fixed computable \coq{f : A -> B} in~$v_1$:
\begin{display}
  \computes {\coqm{fix map l := (...)}} {(\rho~v_2)}
\end{display}

In case the Coq term is syntactically a \coq{fix}, \coq{cstep} uses the definition of $\computes {} {}$ on function types, but generalises the goal over all arguments of \coq{fix} (in this case only \coq{l}):
\begin{display}
  \forall \coqm{l}. \Sigma v : \term . {(\rho~v_2) (\coqm{ext l})} \red^* v \land \computes {\coqm{(fix map l := (...))} \coqm{(ext l)}} v
\end{display}

\coq{cstep} now inserts a a \coq{fix} into the proof term, obtaining an inductive hypothesis \coq{IH} of the same type as the goal.
For the proof term to type-check in the end, \coq{IH} can only be used on arguments structurally smaller than \coq{l}.
To guarantee this, \coq{cstep} always performs a case analysis on the recursive argument first, i.e.\ in this case on \coq{l}, yielding two goals.

In both resulting cases, \coq{cstep} calls \coq{Lrewrite} which uses the inductive hypothesis~\coq{IH} to simplify all occurrences of $(\rho~v_2) (\coqm{ext l'})$ to \coq{ext ((fix map l := (...)) l')}. %
In both goals, \coq{cstep} needs to obtain a procedure $v$ with ${(\rho~v_2) (\coqm{ext l})}$, which is done using \coq{Lsimpl}.
For $\coqm{l}=[]$, the goal is trivial because $\computes {[]} {\coqm{ext []}}$.
\enlargethispage{2\baselineskip}
In the recursive case \coq{l = x :: l'}, \coq{Lsimpl} yields the trivial goal
\begin{display}
  \computes {\coqm{f x :: ((fix map l := (...)) l')}} {\coqm{ext (f x :: ((fix map l := (...))  l')}}
\end{display}%

\subsubsection{Higher-Order Functions}\label{sec:derive-higher-order}

Terms containing higher-order functions applied to arguments need a syntactic transformation to be supported by our framework.
To verify the correctness of e.g.\ \coq{map (fun x => x + y) l} as part of a bigger program, we essentially need to show
\begin{display}
  t_\coqm{map}\,(\lambda x.t_{\coqm{+}}\,x\,y)\,(\encf {} {\coqm{l}}) \sim \coqm{map (fun x => x + y) l}
\end{display}

To use the definition of $\computes {} {}$ for $t_{\coqm{map}}$, we would have to show $(\lambda x.t_{\coqm{+}}\,x\,y) \sim \coqm{(fun x => x + y)}$.
This introduces several difficulties, one is that the term might contain free variables that need to be beta abstracted, and another one occurs when time bound are of interest:
Since our verification of time bounds is only semi-automatic and requires the user to instantiate the recurrences by hand, we would need to interrupt the proof here for a user to fill in the concrete time bounds for $(\lambda x.t_{\coqm{+}}\,x\,y)$.

We thus restrict the scope of the framework and only cover applications of higher order functions to arguments which syntactically are composed from previously extracted term by application (without the use of abstractions).
In this case this would mean that one has to define a Coq term \coq{f y := fun x => x + y}, which has to be extracted before \coq{map (f y) l}.

\subsection{Proving Time Bounds}\label{sec:derive-time-bounds}

All simplification tactics also keep track of the number of $\beta$-steps in reductions and can thus be used to infer recurrence equations a correct time complexity function has to satisfy.
The only obligation left to the user when proving instances of \coq{computableTime} is to provide a solution to this recurrence equations.
As an example, we consider boolean disjunction again and want to find a time complexity function $\tau : \bool \to \unit \to \nat \times (\bool \to \unit \to \nat \times \unit)$:
\begin{lstlisting}
Instance term_orb : computableTime orb $\tau$.
Proof. extract.  
\end{lstlisting}
This leaves the user with the recurrence equations $\pi_1 (\tau x \star) \geq 1$ and $\pi_1(\pi_2(\tau x \star) y \star) \geq 3$, indicating that $t_{\coqm{orb}}$ needs one step to reduce to an abstraction if applied to an encoded boolean $x$ and this abstraction needs 3 further steps to a value if applied to a boolean $y$.
Thus, choosing $\tau$ as \coq{fun _ _ => (1,fun _ _ => (3,tt))} works.
We provide the tactic \coq{solverec} which simplifies goals containing inequations and tries to show them using the \coq{lia} tactic shipped with Coq. If proving the inequality needs further reasoning, the tactic presents the user with simplified goals.

The recurrence equations for the time bound are inferred incrementally by \coq{cstep} using an existential variable. 
To prove \coq{computableTime orb $\tau$}, \coq{cstep} first introduces an assumption \coq{H : ?P $\tau$} and opens a new goal \coq{?P $\tau$}.
In each step, \coq{cstep} performs the transformations described in \cref{sec:derive-correctness} while keeping track of the number of steps, asserting that \coq{$\tau$} needs to be larger than the number of $\beta$-steps performed by instantiating \coq{?P} further.
For non-recursive functions, this will only produce lower bounds for components of $\tau$, while for recursive correctness proof it produces inequalities that contain $\tau$ on both sides.

To find time bound functions interactively, we define the opaque polymorphic constant \coq!cnst {X:Type} (x:X) : nat := 0! which can be used as a place-holder for unknown constants.
To find the time complexity for \coq{map}\footnote{Available as an interactive example in \coqlink[comeUp_timebound]{\coq{Tactics/ComputableDemo.v} as \coq{comeUp_timebound}}} one would start with the following:
\begin{lstlisting}
Lemma termT_map A B (Rx : registered A) (Ry: registered B):
    computableTime (@@map A B) (fun $f$ $\tau_f$ => (cnst "c",fun $l$ _ => (cnst ("g",$l$),tt))).
Proof. extract. solverec. 
\end{lstlisting}
This yields three conditions: \coq{1 <= cnst "c"}, \coq{7 <= cnst ("g", [])}, and \coq{fst ($\tau_f$ a tt) + cnst ("g", l) + 11 <= cnst ("g", a :: l)}.
Note that \coq{cnst} allows us to keep track of the different arguments that the time bound is instantiated with later.
As expected, the time bound of map must also sum up all the time bounds for calling \coq{f} on all elements of the list, and indeed, \coq{solverec} can show the lemma using this time bound:
\begin{lstlisting}
fun $f$ $\tau_f$ => (1, fun $l$ _ => (fold_right (fun $x$ res => $\pi_1$ ($\tau_f$ $x$ tt) + res + 11) 7 $l$, tt))
\end{lstlisting}

\section{Case studies}
\label{sec:casestudy}
We provide three case studies:
A universal \L-term obtained as extraction from the Coq definition of a step-indexed self-interpreter for \L (in \href{https://uds-psl.github.io/certifying-extraction-with-time-bounds/website/L.Functions.Universal.html}{\texttt{Functions/Universal.v}}), a many-one reduction from solvability of Diophantine equations to the halting problem of \L (in \href{https://uds-psl.github.io/certifying-extraction-with-time-bounds/website/L.Reductions.H10.html}{\texttt{Reductions/H10.v}}), and a linear simulation of Turing machines in \L (in \href{https://uds-psl.github.io/certifying-extraction-with-time-bounds/website/L.TM.TMEncoding.html}{\texttt{TM/TMEncoding.v}}).

\subsection{Step-indexed \L-interpreter}
\label{sec:self}
\newcommand\evalf{\texttt{eva}}

A step-indexed interpreter for \L is a function $\evalf : \nat \to \term \to \Option \term$ s.t. for closed $s$ we have $(\exists n.~\evalf~n~s = \some t) \toot (s \red^* t \land t \textit{ is a procedure})$.
The function can be defined as follows~\cite{FS}:
\begin{lstlisting}
Fixpoint eva (n : nat) (u : term) := 
  match u with
  | var n => None | lam s => Some (lam s)
  | app s t => match n with 
              | 0   => None
              | S n => match eva n s, eva n t with
                      | Some (lam s), Some t => eva n (subst s 0 t)
                      |   _ , _ => None
  end           end     end.
\end{lstlisting}
Here \coq{subst s 0 t} denotes substitution, which uses \coq{Nat.eqb} as boolean equality test on natural numbers.
We extract all three functions in reverse order.
To do so, we first need encodings for natural numbers and term constructors as shown in \Cref{sec:add-ind-type} and encodings for terms.
We first generate the encoding function and register it:
\begin{lstlisting}
Run TemplateProgram (tmGenEncode "term_enc" term).
Hint Resolve term_enc_correct : Lrewrite.
\end{lstlisting}
We can then extract the non-constant constructors, \coq{Nat.eqb}, \coq{subst}, and \coq{eva}:
\begin{lstlisting}
Instance term_var : computableTime var (fun n _ => (1, tt)).
Proof. extract constructor. solverec. Qed.
Instance term_app : computableTime app (fun s1 _ => (1, (fun s2 _ => (1, tt)))).
Proof. extract constructor. solverec. Qed.
Instance term_lam : computableTime lam (fun s _ => (1, tt)).
Proof. extract constructor. solverec. Qed.

Instance termT_nat_eqb :
  computableTime Nat.eqb (fun x _ => (5, (fun y _ => ((min x y) * 15 + 8, tt)))).
Proof.  extract. solverec. Qed.

Instance term_substT :
  computableTime subst (fun s _ => (5, (fun n _ => (1, (fun t _ => 
                           (15 * n * size s + 43 * (size s) ^ 2 + 13, tt)))))).
Proof.  extract. solverec. Qed.

Instance term_eva : computable eva.
Proof. extract. Qed.
\end{lstlisting}

Note that the implementation of $\texttt{eva}$ is very naive and needs steps exponential in $n$, we thus omit its time complexity\rlap.\footnote{The recurrence equation generated for $\text{eva}$ one would have to solve reads $f (1+n) (s_1 s_2) \geq f~n~ s_1 + f~n~s_2 + 43 \cdot {(\texttt{size}~t_1)}^ 2 + f ~n~ (\subst {t_1} 0 {t_2}) + 53$, with $\texttt{eva}~n~s_1 = \lambda t_1$ and $\texttt{eva}~n~s_2 = t_2$.}
A more reasonable implementation could be obtained by extracting the heap-based abstract machine from~\cite{kunze2018formal} to \L.

\subsection{Diophantine equations}
\newcommand\LHalt{\ensuremath{\mathcal{E}}}

The problems contained in the library of undecidable problems in Coq~\cite{forster2018towards} are proven undecidable by a chain of many-one reductions starting at the halting problem for Turing machines.
As a matter of fact, all problems contained in the library so far are actually interreducible.
An easy way to prove this is to reduce leafs in the reduction graph to the halting problem for L defined as $ \LHalt s := \exists v.(s \red^* v \land v \textit{ is an abstraction})$ and then implement one general reduction from \LHalt~to the halting problem of Turing machines.

As an example how to reduce problems to \LHalt~we use our framework to reduce solvable Diophantine equations~\cite{Larchey-WendlingForster:2019:H10_in_Coq}, i.e. Hilbert's tenth problem \textsf{H10}, to \LHalt.

We first explain the general structure using mathematical notation.
In~\cite{forster2019synthetic}, the authors define synthetic notions of decidability and enumerability.
If this definitions are enriched with explicit computability assumptions, one obtains:
\begin{definition}
  A predicate $p : X \to \Prop$ is \L-decidable if there exists \emph{a computable} $f : X \to \bool$ s.t. $\forall x.~p x \toot f x = \mathsf{tt}$.
\end{definition}
\begin{definition}
  A predicate $p : X \to \Prop$ is \L-enumerable if there exists \emph{a computable} $f : \nat \to \Option X$ s.t. $\forall x.~p x \toot \exists n.f n = \some x$.
\end{definition}
\setCoqFilename{Computability.Synthetic}%
\begin{theorem}[][L_enumerable_halt]
  If $p : X \to \Prop$ is \L-enumerable and equality on $X$ is \L-decidable, then $p \preceq \mathcal{E}$.
\end{theorem}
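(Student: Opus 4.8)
The plan is to exhibit an explicit, computable many-one reduction $r : X \to \term$ with $\forall x.\ p\,x \toot \LHalt(r\,x)$. Unfolding the hypotheses, \L-enumerability of $p$ provides a computable $f : \nat \to \Option X$ with $\forall x.\ p\,x \toot \exists n.\ f\,n = \some x$, and \L-decidability of equality on $X$ provides a computable $e : X \to X \to \bool$ with $\forall x\,y.\ x = y \toot e\,x\,y = \mathsf{tt}$. From these I would first define a Coq function $g : X \to \nat \to \bool$ that returns $e\,x\,y$ when $f\,n = \some y$ and $\mathsf{ff}$ otherwise; it is computable by the framework, being a match over a composition of computable functions, and by construction $g\,x\,n = \mathsf{tt} \toot f\,n = \some x$, hence $p\,x \toot \exists n.\ g\,x\,n = \mathsf{tt}$. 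So it suffices to reduce the predicate ``$\exists n.\ g\,x\,n = \mathsf{tt}$'' to $\LHalt$.

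Using the extracted term $t_g$ with $\computes{g}{t_g}$, the extracted successor $t_S$, and $\rho$ from \Cref{lem:rho}, I would build once and for all the closed search term $U := \rho\,\big(\lambda r\,x\,n.\ t_g\,x\,n\,(\lambda\_.\,I)\,(\lambda\_.\,r\,x\,(t_S\,n))\,I\big)$ with $I := \lambda z.z$, and set $r\,x := U\,(\encf{} x)\,(\encf{\nat} 0)$. Delaying both branches of the Scott boolean behind $\lambda\_.(\cdot)$ is essential so that, in the call-by-value setting, the recursive branch is not evaluated when the test succeeds; this is what makes the unbounded search terminate exactly when it should. Since $U$ is a fixed closed term and $\encf{}$ is an encoding function, the map $r$ is a Coq function built purely from extracted (hence computable) data together with a fixed term, so $r$ is itself computable; it remains to verify the correctness equivalence.

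For correctness I would prove the key lemma $\forall n.\ \big(U\,(\encf{} x)\,(\encf{\nat} n)\ \text{reduces to a procedure}\big) \toot \exists m \geq n.\ g\,x\,m = \mathsf{tt}$; instantiating at $n = 0$ and combining with $p\,x \toot \exists n.\ g\,x\,n = \mathsf{tt}$ then yields the theorem. The direction ``$\Leftarrow$'' is by induction on $m - n$: unfolding $\rho$ and simplifying with \coq{Lsimpl}, one iteration reduces $U\,(\encf{}x)\,(\encf{\nat} n)$ either to $I$ when $g\,x\,n = \mathsf{tt}$ (done), or to $U\,(\encf{}x)\,(\encf{\nat}(1+n))$ when $g\,x\,n = \mathsf{ff}$, to which the induction hypothesis applies. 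The direction ``$\Rightarrow$'' --- a run that halts must actually have found a witness --- is the main obstacle. Here I would invoke uniform confluence of \L-reduction (Corollary 8) so that the number $k$ of $\beta$-steps of $U\,(\encf{}x)\,(\encf{\nat}n)$ to its normal form is well-defined, then induct on $k$: each non-halting iteration consumes at least one $\beta$-step before emitting the recursive call with index $1+n$, so the search performs at most $k$ iterations before it must stop, and it can stop only via $g\,x\,m = \mathsf{tt}$ for some $m$. Putting the two directions together gives $\forall x.\ p\,x \toot \LHalt(r\,x)$, i.e. $p \preceq \LHalt$, which is the claim. The only genuinely delicate points are the call-by-value thunking in $U$ and the step-counting argument in the ``$\Rightarrow$'' direction.
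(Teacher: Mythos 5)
Your proposal is correct and follows essentially the same route as the paper: combine the enumerator and the equality decider into a boolean test on $\nat$ and then run an unbounded search, so that $p\,x$ holds iff the search term halts. The only difference is that the paper simply applies a pre-verified unbounded-search operator $\mu$ to the term $\lambda n.\,t_f\,n\,(\lambda y.\,t_d\,x\,y)\,t_{\mathsf{ff}}$, whereas you construct and verify the search combinator by hand via $\rho$, call-by-value thunking, and the uniform-confluence step-counting argument --- i.e.\ you inline the proof of correctness of $\mu$.
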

\begin{proof}
  Let $f$ be the (computable) enumerator $\nat \to \Option X$ and $d : X \times X \to \bool$ the (computable) equality decider.
  We define $s := \lambda x.~\mu (\lambda n.~t_f n~(\lambda y.t_d~x~y)~t_{\mathsf{ff}})$.
  Here, $\mu$ is an unbounded search operator, i.e.\ $s$ performs unbouded search for $x$ in the range of $f$.
  Then $p x \toot \mathcal{E}(s ~\overline x)$.
\end{proof}
Moreover, it is easier to implement concrete enumerators based on lists, i.e.\ computable enumerators $f : \nat \to \List X$ s.t. $p x \toot \exists n.~x \in f n$.
The equivalence proof of both notions can be found in~\cite{forster2019synthetic}.
Extending the proof with explicit computability assumptions as needed here is straightforward and we refer to the Coq code.

We now switch to a more technical notation and show how to construct such a list enumerator for \textsf{H10} in Coq.
We first define the type of polynomials, generate its encoding and extract its constructors:
\begin{lstlisting}
Inductive poly : Set :=
    poly_cst : nat -> poly             | poly_var : nat -> poly 
  | poly_add : poly -> poly -> poly | poly_mul : poly -> poly -> poly.

Run TemplateProgram (tmGenEncode "enc_poly" poly).
Hint Resolve enc_poly_correct : Lrewrite.

Instance term_poly_cst : computable poly_cst. extract constructor. Qed.
Instance term_poly_var : computable poly_var.   extract constructor. Qed.
Instance term_poly_add : computable poly_add.   extract constructor. Qed.
Instance term_poly_mul : computable poly_mul.   extract constructor. Qed.
\end{lstlisting}
We define evaluation of polynomials under assignments \coq{S : list nat} as and the decision problem \coq{H10} as follows:
\begin{lstlisting}
Fixpoint eval (p : poly) (S : list nat) :=
  match p with
  | poly_cst n => n
  | poly_var n => nth n S 0
  | poly_add p1 p2 => eval p1 S + eval p2 S
  | poly_mul p1 p2 => eval p1 S * eval p2 S
  end.
Definition H10 '(p1, p2) := exists S, eval p1 S = eval p2 S.
Instance term_eval : computable eval. extract. Qed.
\end{lstlisting}
where \coq{nth n S d} returns the \coq{n}-th element in \coq{S}, or \coq{d} if \coq{S} is not long enough.
We also define a computable function \coq{poly_eqb : poly -> poly -> bool} deciding syntactic equality.

To show that \textsf{H10} is \L-enumerable, we enumerate all polynomials using \coq{L_poly : nat -> list poly}.
Due to the restriction that higher-order arguments can not syntactically contain abstractions, we first extract uncurried versions of the constructors:
\begin{lstlisting}
Definition poly_add' '(x,y) : poly  := poly_add x y.
Instance term_poly_add' : computable poly_add'. extract. Qed.

Definition poly_mul' '(x,y) : poly := poly_mul x y.
Instance term_poly_mul' : computable poly_mul'. extract. Qed.

Fixpoint L_poly n : list (poly) :=
  match n with
  | 0 => []
  | S n => L_poly n  ++  map poly_cst (L_nat n) ++ map poly_var (L_nat n)
                      ++ map poly_add' (list_prod (L_poly n) (L_poly n))
                      ++ map poly_mul' (list_prod (L_poly n) (L_poly n))
  end.
  
Instance term_L_poly : computable L_poly. extract. Qed.
\end{lstlisting}

The last and crucial lemma is the adaption of Fact 2.9 from~\cite{forster2019synthetic}:

\begin{lemma}[][projection]
  If $p : X \times Y \to \Prop$ is \L-enumerable, then $\lambda x.\exists y.~p(x,y)$ is \L-enumerable.
\end{lemma}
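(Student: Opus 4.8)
The plan is to follow the proof of Fact~2.9 in~\cite{forster2019synthetic}, carrying the explicit computability witnesses along. First I would unfold \L-enumerability of $p$ in its list-based form: there is a \emph{computable} $f : \nat \to \List (X \times Y)$ with $p\,(x,y) \toot \exists n.~(x,y) \in f\,n$ (the equivalence of the list- and option-based notions, with computability, is the point noted just above the statement). The candidate enumerator for $\lambda x.\exists y.~p\,(x,y)$ is then $g := \lambda n.~\coqm{map}\ \coqm{fst}\ (f\,n)$, i.e.\ we project every pair enumerated by $f\,n$ onto its first component, where $\coqm{fst}$ is the first projection on $X \times Y$.

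Three things then have to be checked. First, $g$ is computable: this is discharged by the \coq{extract} tactic once $\coqm{map}$, the enumerator $f$, and the constant $\coqm{fst}$ instantiated with the section variables $X$ and $Y$ have been extracted; since $\coqm{fst}$ occurs fully applied to its type parameters, the higher-order application $\coqm{map}\ \coqm{fst}$ respects the syntactic restriction of \Cref{sec:derive-higher-order}. Second, the membership lemma $x \in \coqm{map}\ \coqm{fst}\ L \leftrightarrow \exists y.~(x,y) \in L$, by a routine induction on $L$. Third, the correctness equivalence $(\exists y.~p\,(x,y)) \leftrightarrow (\exists n.~x \in g\,n)$ for every $x$: chain the enumeration property of $f$ (inside the $\exists y$) with the membership lemma, and commute the two existential quantifiers, using $\exists n\,\exists y \leftrightarrow \exists y\,\exists n$.

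I expect the only real obstacle to be bookkeeping rather than mathematics. One must ensure all encodability assumptions are in scope: $X$ and $Y$ must be registered, hence also $X \times Y$, $\List X$ and $\List (X \times Y)$; and the constant fed to the higher-order $\coqm{map}$ must be extracted \emph{before} $g$, so that type-class search finds the corresponding \coq{computable} instance. If the framework insists on a named helper — as it does for the polynomial constructors above — I would instead introduce $\coqm{proj}\,(x,y) := x$, extract it first, and use $\coqm{map}\ \coqm{proj}$ in the definition of $g$. With that in place the correctness proof is a two- or three-line argument, and any remaining goals are cleared by a \coq{firstorder}-style step (and \coq{solverec} for the computability side).
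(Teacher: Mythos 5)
Your proposal is correct and matches the intended argument: the paper gives no explicit proof here, deferring to Fact~2.9 of the cited synthetic-computability work, whose proof is exactly the projection you describe — map the first projection over a list enumerator of the pairs, commute the existentials, and carry the computability witness through by extracting \coq{fst} (or a named helper) before the enumerator. The bookkeeping concerns you raise (registering the product and list types, and respecting the higher-order restriction on the argument to \coq{map}) are the right ones and are handled the same way in the development.
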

\setCoqFilename{Reductions.H10}%
\begin{theorem}[][H10_enumerable]
  \textsf{\emph{H10}} is \L-enumerable.
\end{theorem}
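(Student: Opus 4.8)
The plan is to obtain \textsf{H10} as a projection of an \L-decidable predicate over an \L-enumerable type and then invoke the preceding projection lemma. Concretely, define $q : (\coqm{poly} \times \coqm{poly}) \times \List\nat \to \Prop$ by $q\,((p_1,p_2), S) := \coqm{eval}\ p_1\ S = \coqm{eval}\ p_2\ S$. For every pair of polynomials we then have $\textsf{H10}\,(p_1,p_2) \toot \exists S.\ q\,((p_1,p_2),S)$ definitionally (after the trivial commutation of the existential with the pattern match), so by the projection lemma (with $X := \coqm{poly}\times\coqm{poly}$ and $Y := \List\nat$) it suffices to show that $q$ is \L-enumerable.

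For this I would use the list-based formulation of \L-enumerability and build a computable $L_q : \nat \to \List\bigl((\coqm{poly}\times\coqm{poly})\times\List\nat\bigr)$ with $q(x) \toot \exists n.\ x \in L_q\ n$. The predicate $q$ is decided by the boolean test $b\,((p_1,p_2),S) := \coqm{Nat.eqb}\ (\coqm{eval}\ p_1\ S)\ (\coqm{eval}\ p_2\ S)$, which is computable by combining the instances \coq{term_eval} and \coq{termT_nat_eqb} and whose correctness is the standard specification of \coq{Nat.eqb}. Hence it is enough to enumerate the whole type $(\coqm{poly}\times\coqm{poly})\times\List\nat$ stagewise and keep the elements on which $b$ returns \coq{true}. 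The stage-$n$ list is assembled from \coq{L_poly} (already extracted above), an enumerator \coq{L_nat} for $\nat$, and an enumerator \coq{L_list} for $\List\nat$ obtained from \coq{L_nat}, combined via \coq{list_prod} and \coq{map} and then filtered by $b$; every combinator here is computable — \coq{list_prod} and \coq{map} already (they occur in \coq{L_poly}, whose instance was proved by \coq{extract}), the rest by further \coq{extract} calls — so $L_q$ is computable. Exhaustiveness follows from cumulativity of the component enumerators: an induction on a polynomial shows that every $p$ appears at some stage of \coq{L_poly}, similarly every natural at some stage of \coq{L_nat} and every list at some stage of \coq{L_list}, and by monotonicity the stages can be aligned so that a triple $((p_1,p_2),S)$ occurs in $L_q\ n$ for a common $n$ precisely when $b$ accepts it, i.e.\ precisely when $q$ holds.

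Assembling these facts yields \L-enumerability of $q$, and the projection lemma then gives \L-enumerability of \textsf{H10}. No step here is conceptually hard; the bulk of the work is the routine plumbing of \coq{computable} instances for \coq{L_nat}, \coq{L_list} and \coq{filter} (on top of the already-available \coq{list_prod} and \coq{map}) together with the cumulativity lemmas needed to make the combined enumerator exhaustive. Note in particular that, in contrast to the earlier reduction to $\LHalt$, no decidability of equality on \coq{poly} is needed for \L-enumerability of \textsf{H10}.
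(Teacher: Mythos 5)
Your proposal matches the paper's proof: the paper likewise invokes the projection lemma and exhibits a cumulative list enumerator that filters the stagewise Cartesian product of \coq{L_poly}, \coq{L_poly} and an enumerator for \coq{list nat} by the test \coq{Nat.eqb (eval p1 S) (eval p2 S)}. The only difference is that you spell out the routine computability and cumulativity obligations the paper leaves implicit.
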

\begin{proof}
  By \Cref{coq:projection} we have to give a list enumerator for two polynomials \coq{p1} and \coq{p2} together with solutions \coq{S}:
\begin{lstlisting}
fix f n := match n with 0 => [] 
  | S n => f n ++ filter (fun '(p1,p2,S) => Nat.eqb (eval p1 S) (eval p2 S)) 
                    (list_prod (list_prod (L_poly n) (L_poly n)) (L_list_nat n)) end.
\end{lstlisting}
  where \coq{list_prod} is the cartesian product on lists and \coq{L_list_nat} is a list enumerator for \coq{list nat}.
\end{proof}

\begin{corollary}[][H10_converges]
  $\textsf{\emph{H10}} \preceq \mathcal{E}$
\end{corollary}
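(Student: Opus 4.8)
The plan is to obtain $\textsf{H10} \preceq \mathcal{E}$ as a direct instance of \Cref{coq:L_enumerable_halt}, applied to the predicate $p := \textsf{H10} : \texttt{poly} \times \texttt{poly} \to \Prop$. That theorem has two hypotheses: $p$ must be \L-enumerable, and equality on its domain must be \L-decidable. The first hypothesis is exactly \Cref{coq:H10_enumerable}. Hence the only thing left to provide is that equality on $\texttt{poly} \times \texttt{poly}$ is \L-decidable.

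For that, I would lift the computable syntactic equality test $\texttt{poly\_eqb} : \texttt{poly} \to \texttt{poly} \to \bool$ (already available, together with its correctness lemma $\forall p\,q.~\texttt{poly\_eqb}~p~q = \mathsf{tt} \leftrightarrow p = q$) to pairs: define \coq{fun '(p1,p2) '(q1,q2) => poly_eqb p1 q1 && poly_eqb p2 q2}, derive its \coq{computable} instance with \coq{extract} and the tactics of \Cref{sec:verification}, and prove it decides equality on $\texttt{poly}\times\texttt{poly}$ by a short case analysis reducing to the correctness of $\texttt{poly\_eqb}$ and the injectivity of pairing. Boolean conjunction $\&\&$ is itself computable, so its instance is picked up automatically while verifying the lifted decider.

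Feeding the \L-enumerator from \Cref{coq:H10_enumerable} and this equality decider into \Cref{coq:L_enumerable_halt} yields the claim. I do not expect any real obstacle here: the proof is a composition of two results already in place, and the only glue needed — transporting \L-decidability of equality from the component type $\texttt{poly}$ to the product type — is exactly the kind of routine task the extraction framework handles without manual reduction-level reasoning.
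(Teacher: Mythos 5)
Your proposal matches the paper's proof, which is simply ``By Theorems~\ref{coq:H10_enumerable} and~\ref{coq:L_enumerable_halt}''; you additionally spell out the discharge of the equality-decidability side condition via \texttt{poly\_eqb} lifted to pairs, which the paper leaves implicit (having already noted that \texttt{poly\_eqb} is defined and computable). Same approach, just with the routine glue made explicit.
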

\begin{proof}
  By Theorems~\ref{coq:H10_enumerable} and~\ref{coq:L_enumerable_halt}.
\end{proof}

\subsection{Turing Machines}

We show how our framework can be used to reduce the halting problem of multi-tape Turing machines $\texttt{Halt}$ to the halting problem of \L.
We employ a Coq implementation of the definition of Turing machines by Asperti and Ricciotti~\cite{asperti2015}, who formalise Turing machines in Matita.
\begin{lstlisting}
Definition loopM : forall (sig : finType) (n : nat) (M : mTM sig n),
   mconfig sig (states M) n -> nat -> option (mconfig sig (states M) n) := (* ... *)

Definition Halt :{ '(Sigma, n) : _ & mTM Sigma n & tapes Sigma n} -> _ :=
  fun '(existT2 _ _ (Sigma, n) M tp) =>
    exists (f: mconfig _ (states M) _), halt (cstate f) = true
                                   /\ exists k, loopM (mk_mconfig (start M) tp) k = Some f.
\end{lstlisting}

Their formalisation uses the (dependent) vector type to model multiple tapes and an explicit transition function.
Both aspects do not fit in our framework directly.
We thus showcase two techniques to extend our framework in certain cases.

First, to encode types not in the scope of the framework, we notice that an encoding for a type $A$ can be obtained from an encoding function $\encf B$ given an injective function $A \to B$.
We pack this insight in the definition \coq{registerAs}, which can be used as follows:
\begin{lstlisting}
Instance register_vector X `{registered X} n : registered (Vector.t X n).
Proof. apply (registerAs VectorDef.to_list). (* injectivity proof *) Defined.
\end{lstlisting}

Second, we observe that computability is closed under extensional equality:

\setCoqFilename{Tactics.Computable}
\begin{definition}[][extEq]
  We define extensional equality for a type $A$ with $\texttt{ty} : \TT{A}$ recursively on $\texttt{ty}$.
  Elements $x, y$ of an encodable type $A$ are extensionally equal if they are equal.
  Functions $f, g : A \to B$ are extensionally equal if for all $a : A$, $f a$ is extensionally equal to $g a$.
\end{definition}

\begin{lemma}[][computesExt]
If $f$ and $g$ are extensionally equal and $\computes[\timec {}] f {t}$ then $\computes[\timec {}] g t$.
\end{lemma}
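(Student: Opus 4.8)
The plan is to proceed by induction on $\texttt{ty} : \TT A$ (equivalently, by recursion on $\texttt{ty}$), which is the natural move since both the time-bounded computability predicate and extensional equality are defined by recursion on $\texttt{ty}$ rather than on $A$, so the two recursive unfoldings line up with the cases of the induction. There are exactly two cases.

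In the base case $A$ is an encodable type, so extensional equality of $f$ and $g$ is plain equality $f = g$, and the statements $\computes[\timec{}] f t$ and $\computes[\timec{}] g t$ are literally the same (both unfold to $t = \encf A f = \encf A g$); nothing has to be shown.

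In the arrow case $A = A_1 \to A_2$, with sub-data $\texttt{ty}_1 : \TT{A_1}$ and $\texttt{ty}_2 : \TT{A_2}$, extensional equality of $f, g : A_1 \to A_2$ unfolds to: for every $a : A_1$, the elements $f a$ and $g a$ are extensionally equal at $\texttt{ty}_2$. Assume $\computes[\timec{}] f t$; unfolding, $t$ is a procedure and for all $a$, $t_a$, $\timec a$ with $\computes[\timec a] a {t_a}$ there is a $v$ with $t\, t_a \red^{\leq n} v$ and $\computes[\timec v]{f a} v$, where $\timec{}\, a\, \timec a = (n, \timec v)$. To establish $\computes[\timec{}] g t$ I reuse the same procedure $t$, and for any given $a$, $t_a$, $\timec a$ the same witness $v$ with the same $n$ and $\timec v$: the reduction $t\, t_a \red^{\leq n} v$ and the decomposition $\timec{}\, a\, \timec a = (n, \timec v)$ mention neither $f$ nor $g$ and hence carry over verbatim. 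The only remaining goal is $\computes[\timec v]{g a} v$, which follows from $\computes[\timec v]{f a} v$ by the induction hypothesis for $\texttt{ty}_2$ applied to the extensionally equal pair $f a$, $g a$.

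The argument is entirely routine; the only point that needs care is setting up the induction on $\texttt{ty}$ (recursion over types being unavailable in Coq) so that the unfoldings of the computability predicate and of extensional equality are available in each case. The correctness-only version for $\computes{\cdot}{\cdot}$ is proved identically, and in fact follows by discarding the time components.
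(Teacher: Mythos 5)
Your proof is correct and takes the same route as the paper's Coq development: induction on the code $\texttt{ty} : \TT A$, with the base case collapsing to equality of encodings and the arrow case reusing the same witnesses $v$, $n$, $\timec{}$ and invoking the inductive hypothesis at $\texttt{ty}_2$ on the extensionally equal pair $f\,a$, $g\,a$. Nothing is missing.
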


Combining those two insights allows us to extract any vector operation by extracting the corresponding list-operation.

Furthermore, we use the fact that functions with finite domain and co-domain can always be translated into a value table containing lists of pairs.
We can thus show that every transition function is computable in time independent of the current configuration, and derive time bound for \coq{loopM}, executing a machine for $k$ steps:
\begin{lstlisting}
Instance term_trans : computableTime (trans (m:=M)) (fun _ _ => (transTime,tt)).
Proof. (* ... *) Qed.

Instance term_loopM :
  let c1 := (haltTime + n*121 + transTime + 76) in let c2 := 13 + haltTime in
  computableTime (loopM (M:=M)) (fun _ _ => (5,fun k _ => (c1 * k + c2,tt))).
Proof. unfold loopM. extract. solverec. Qed.
\end{lstlisting}

Here \coq{haltTime} and \coq{transTime} are constants depending on the concrete machine, its number of tapes and its alphabet.
By unbounded search over all number of steps $k$ we obtain:

\setCoqFilename{Reductions.TM}
\begin{theorem}[][Halt_eva]
  \emph{${\coqm{Halt}}$} reduces to $\mathcal{E}$.
\end{theorem}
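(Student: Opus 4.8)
The plan is to exhibit, uniformly in the instance $(\Sigma,n,M,tp)$, a closed $\L$-term $s$ that performs unbounded search over the step index of $\texttt{loopM}$ and converges exactly when that search finds a step count at which $M$ has halted; the reduction then maps $(\Sigma,n,M,tp)$ to $s$. First I would assemble the encodings supplied by the extension techniques of this subsection: the state set and the tape alphabet of $M$ are finite, hence encodable by value tables; $\texttt{tape}$ and $\texttt{mconfig}$ are simple inductives over these and so are registered by $\texttt{tmGenEncode}$; and the tape vector is registered via $\texttt{registerAs}$ applied to $\texttt{to\_list}$, using \Cref{coq:computesExt} to transport computability from the list operations to the corresponding vector operations. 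Together with the $\texttt{computableTime}$ instance for $\texttt{loopM}$ established above (of which only the correctness half is needed here), this makes the initial configuration $c_0 := \texttt{mk\_mconfig}\,(\texttt{start}\,M)\,tp$ encodable and gives $\texttt{loopM}$ for this $M$ an extraction $t_{\texttt{loopM}}$. All of this is a genuine Coq-level function of $M$, so the reduction map is well defined.

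Next I would introduce $h : \nat \to \bool$ with $h\,k := \texttt{match}\ \texttt{loopM}\ c_0\ k\ \texttt{with}\ \texttt{Some}\ f \Rightarrow \texttt{halt}(\texttt{cstate}\ f)\ |\ \texttt{None} \Rightarrow \texttt{false}\ \texttt{end}$ (the inner $\texttt{halt}$ test is redundant, since $\texttt{loopM}$ returns $\texttt{Some}$ only on a halting configuration, but harmless). Since $h$ is built from pieces that are already or trivially extractable, $\texttt{extract}$ yields $t_h$ with $\computes h {t_h}$ automatically; in particular $t_h$ is a procedure and $t_h\,\overline k \red^* \encf{\bool}(h\,k)$ for every $k$. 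I would then set $s := \mu\,t_h$, where $\mu$ is the unbounded-search combinator of $\L$ built from $\rho$ of \Cref{lem:rho} (the $\mu$ already used in the proof of \Cref{coq:L_enumerable_halt}); for a $g$ computing a total boolean function, such as $t_h$, its characterising property is that $\mu\,g$ normalises iff $\exists k.\ g\,\overline k \red^* \encf{\bool}\texttt{true}$.

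It remains to verify $\texttt{Halt}(\Sigma,n,M,tp) \toot \mathcal{E}(s)$. Unfolding $\texttt{Halt}$, its left side is equivalent to $\exists k\,\exists f.\ \texttt{loopM}\,c_0\,k = \texttt{Some}\,f \land \texttt{halt}(\texttt{cstate}\,f) = \texttt{true}$, i.e.\ to $\exists k.\ h\,k = \texttt{true}$. If such a $k$ exists then $t_h\,\overline k \red^* \encf{\bool}\texttt{true}$, hence $\mu\,t_h$ normalises and $\mathcal{E}(s)$ holds. Conversely, if $\mathcal{E}(s)$ then $\mu\,t_h$ normalises, so by the property of $\mu$ there is a $k$ with $t_h\,\overline k \red^* \encf{\bool}\texttt{true}$; since $t_h\,\overline k \red^* \encf{\bool}(h\,k)$ as well and both reducts are procedures, they coincide, and $\encf{\bool}$ is injective, so $h\,k = \texttt{true}$, which unfolds to the halting configuration of $M$ required by $\texttt{Halt}$. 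Thus $\texttt{Halt} \preceq \mathcal{E}$.

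The real difficulty is not this final assembly — which closely mirrors the construction in the proof of \Cref{coq:L_enumerable_halt} — but the preparatory work it rests on: coaxing the dependently typed, function-carrying record $M$ into the framework, in particular representing $\texttt{trans}$ (and $\texttt{halt}$) as finite value tables so that $t_{\texttt{loopM}}$ exists at all, and encoding multi-tape configurations through vectors and \Cref{coq:computesExt}. Once those instances are in place, as they are by this point, everything above is routine.
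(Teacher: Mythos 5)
Your proposal is correct and matches the paper's approach: the paper's entire proof is the single remark ``By unbounded search over all number of steps $k$ we obtain'' the theorem, resting on the previously established \coq{computableTime} instance for \coq{loopM}, the vector encoding via \coq{registerAs}, the value-table treatment of \coq{trans}, and \Cref{coq:computesExt}. You have simply spelled out the details of that same unbounded-search construction (the boolean test on \coq{loopM} plus the $\mu$ combinator, as in \Cref{coq:L_enumerable_halt}) that the paper leaves implicit.
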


\section{Conclusion}\label{sec:future}

\subparagraph*{Formalisation}

The tools in our framework heavily rely on Coq's tactic language Ltac to verify the correctness of extracted terms.
During the verification, existential variables are crucial to generate the recurrence equations described in \cref{sec:derive-time-bounds} while simultaneously simplifying the L-terms as described in \cref{sec:symb-simp}.
For this simplification, we implement a reflective simplification tactic for L-terms used in \coq{Lbeta}.
We tried to use setoid-rewriting for \coq{Lrewrite}, but the need to track the number of reduction steps requires us to implement our own, domain-specific rewriting tactic in Ltac.
This tactic implements bottom-up rewriting, resulting in smaller proof terms and faster rewriting, by performing many rewrite steps in one pass through the term: A tactic using congruence lemmas descends in the term and on the way out, rewriting steps are performed. We use the hint databases for the \coq{auto}-tactic to add new lemmas for rewriting.

Typeclasses are employed as a kind of dictionary, e.g. to look up the extraction for a previously extracted function or its correctness lemma.

The framework consists of roughly 2100 lines of code, of which 370 are for the definitions described in \cref{sec:time-bounds} and their properties, 380 are for the extraction in \cref{sec:extract}, 950 are for the simplification presented in \cref{sec:symb-simp}, and 420 are for the tactics proving those extracts correct in \cref{sec:derive-correctness}.

In total, the case studies consist of 340 lines of specification and 280 lines f code: 20 lines are for the universal machine, 200 for H10 and 400 for the Turing machine interpreter.
All examples are built on a library of extracted functions concerning natural numbers, booleans and lists, which consists of 360 lines of code.

\subparagraph*{Future Work}

There are several directions in which the framework can be extended.
We would like to extend the framework to support space bounds in addition to time bounds, based on the space measure defined in~\cite{FKR}.
Furthermore, our automation framework is sound by construction, because it produces proofs.
We conjecture it to be complete for the described fragment of Coq's type theory we are considering, but reasoning about tactics programmed in Ltac is basically impossible.
In the future, we would like to be able to support all of Coq's type theory (possibly leaving out co-inductive types).
In order to do that, the extraction process would have to support proof and type erasure, which can be implemented using Template-Coq.

On the more conceptual side, our extraction basically returns realisers in a realisability model for the treated fragment of Coq's type theory.
We would like to analyse and verify such realisability models using MetaCoq, possibly connecting the (weak call-by-value) evaluation relation defined in MetaCoq with reduction in the realisability model, yielding a proof that for a certain subset of Coq's type theory, all definable functions are indeed computable.

Lastly, we hope that our framework enables the formalisation of basic computational complexity theory in Coq.
We would like to mechanise results like a time hierarchy theorem for the call-by-value $\lambda$-calculus.
The commonly known proofs for Turing machines or similar models use self-interpreters.
The tightness of the provable gap then depends on the time-efficiency of the interpreter in use.
As mentioned, the self-interpreter given in \Cref{sec:self} is too inefficient and we want to extract the interpreters described in~\cite{kunze2018formal} and~\cite{FKR} to \L.

\bibliography{bib.bib}

\end{document}